\documentclass[sigconf]{acmart}
\AtBeginDocument{%
  }

\setcopyright{acmlicensed}
\copyrightyear{2026}
\acmYear{2026}
\acmDOI{XXXXXXX.XXXXXXX}
\acmISBN{978-1-4503-XXXX-X/2026/06}

\usepackage{braket}
\usepackage{xcolor}
\usepackage[table]{xcolor}
\usepackage{amsthm}
\usepackage{graphicx}
\usepackage{subcaption}
\usepackage{forest}
\usepackage{tikz}
\usetikzlibrary{positioning,arrows.meta,calc}
\newtheorem{proposition}{Proposition}

\ifdefined\SHOWCOMMENTS

    \newcommand{\anik}[1]{%
        \textcolor{blue}{\textbf{Anik:} #1}%
    }

    \newcommand{\suman}[1]{%
        \textcolor{red}{\textbf{Suman:} #1}%
    }

    \newcommand{\siyi}[1]{%
        \textcolor{cyan}{\textbf{Siyi:} #1}%
    }

\else

    \newcommand{\anik}[1]{}
    \newcommand{\suman}[1]{}
    \newcommand{\siyi}[1]{}

\fi

\begin{document}


\title{%
{\normalsize Invited Paper}\\
Structure-Aware Placement and Routing of Multi-Controlled Toffoli on Bivariate Bicycle Code Architectures}


\author{Anik Basu Bhaumik}
\orcid{0009-0000-0778-2234}
\affiliation{%
  \institution{Nanyang Technological University}
  \country{Singapore}
}
\email{anikbasu001@e.ntu.edu.sg}

\author{Suman Dutta}
\orcid{0009-0001-4891-9919}
\affiliation{%
\institution{Nanyang Technological University}
 \country{Singapore}
 }
 \email{sumand.iiserb@gmail.com}

\author{Siyi Wang}
\orcid{0009-0006-9128-1857}
\affiliation{%
  \institution{Nanyang Technological University}
  \country{Singapore}
  }
  \email{siyi002@e.ntu.edu.sg}

\author{Anupam Chattopadhyay}
\orcid{0000-0002-8818-6983}
\affiliation{%
  \institution{Nanyang Technological University}
  \country{Singapore}
  }
\email{anupam@ntu.edu.sg}

\renewcommand{\shortauthors}{Basu Bhaumik et al.}

\begin{abstract}

The multi-controlled Toffoli (MCT) gate is a fundamental primitive in quantum circuit design, with applications in quantum arithmetic, cryptanalysis, and algorithmic implementations. Being a high-level logical operation, the efficient decomposition of MCT gates into lower-level netlists has remained a major optimization challenge for decades. While emerging quantum error-correcting codes such as bivariate bicycle (BB) codes drastically reduce fault-tolerance overhead, realizing non-Clifford circuits on modular BB-code architectures introduces complex compilation bottlenecks governed by inter-module routing, factory density, and layout. Consequently, the mapping of MCT gates onto BB-code architectures remains relatively unexplored.

In this paper, we overcome these challenges by mapping optimal-Toffoli-depth MCT decompositions (Dutta et al., PRA, 2025) onto BB-code-based fault-tolerant architectures via direct $\ket{\mathrm{CCZ}}$ state injection from an external magic state factory. We introduce a targeted placement strategy that exploits the binary-tree structure of MCT decompositions to co-locate interacting subtrees. This approach reduces inter-module instruction counts by up to $\mathbf{16.02}\%$ compared to a naive sequential first-fit placement. We also evaluate the impact of factory placement across different topologies, demonstrating that grid-based layouts yield up to a $\mathbf{23.7}\%$ reduction in inter-module instructions relative to linear architectures (Yoder et al., arXiv, 2025). Finally, we validate the practical viability of our compiled circuits by analyzing aggregate execution errors and logical failure probabilities using the bicycle-ISA error estimator \texttt{bicycle\_numerics} provided by the Qiskit community\footnote{\url{https://github.com/qiskit-community/bicycle-architecture-compiler}}.

\end{abstract}

\begin{CCSXML}
<ccs2012>
 <concept>
  <concept_id>00000000.0000000.0000000</concept_id>
  <concept_desc>Do Not Use This Code, Generate the Correct Terms for Your Paper</concept_desc>
  <concept_significance>500</concept_significance>
 </concept>
 <concept>
  <concept_id>00000000.00000000.00000000</concept_id>
  <concept_desc>Do Not Use This Code, Generate the Correct Terms for Your Paper</concept_desc>
  <concept_significance>300</concept_significance>
 </concept>
 <concept>
  <concept_id>00000000.00000000.00000000</concept_id>
  <concept_desc>Do Not Use This Code, Generate the Correct Terms for Your Paper</concept_desc>
  <concept_significance>100</concept_significance>
 </concept>
 <concept>
  <concept_id>00000000.00000000.00000000</concept_id>
  <concept_desc>Do Not Use This Code, Generate the Correct Terms for Your Paper</concept_desc>
  <concept_significance>100</concept_significance>
 </concept>
</ccs2012>
\end{CCSXML}

\ccsdesc[500]{Bivariate Bicycle (BB) Code}
\ccsdesc{Fault-Tolerant Quantum Computing (FTQC)}
\ccsdesc[300]{Multi-Controlled Toffoli (MCT) Gate}
\ccsdesc{Quantum Circuit Design}
\ccsdesc[100]{Quantum Error-Correcting Codes}


\received{xx xxx 2026}
\received[revised]{xx xxx 2026}
\received[accepted]{xx xxx 2026}

\maketitle
\section{Introduction}
\label{sec:intro}
Quantum gates are the fundamental building blocks of quantum circuits~\cite{gates1995}. Unlike classical gates, quantum gates are inherently reversible and are mathematically represented by unitary matrices. Among them, the multi-controlled Toffoli (MCT) gate is the most significant one with widespread applications in arithmetic circuit design~\cite{arithmetic}, reversible computation, oracle construction~\cite{dutta2026qic}, and quantum error-correction subroutines. Since MCT gates are logical quantum operations, they must be decomposed into a universal gate set~\cite{universal-gate2005}, such as the Clifford+T gate set. Consequently, efficient decomposition of MCT gates into the Clifford+T gate set has received considerable attention over the past two decades. Existing approaches have optimized various logical-level metrics, including T-count~\cite{gidney2021cccz}, T-depth~\cite{dutta2025pra}, ancilla count~\cite{nie2024arxiv}, and overall gate complexity, yielding several asymptotically optimal constructions under unrestricted~\cite{khattar2025quantum,dutta2025pra} and restricted~\cite{bhaumik2026optimal} qubit connectivity, assuming ideal, error-free execution. However, the efficient implementation of MCT gates in an error-corrected architecture and their resource constraints remain relatively underexplored in the literature.

Quantum error correction~\cite{shor-code,qecc-25years} is widely regarded as an enabling technology for large-scale quantum computation. Although noisy intermediate-scale quantum (NISQ)~\cite{preskill2018quantum} devices have demonstrated remarkable experimental progress, their limited coherence times and high physical error rates prevent the reliable execution of deep quantum circuits. Fault-tolerant quantum computing~\cite{ftqc} overcomes these limitations by encoding logical qubits into quantum error-correcting codes (QECCs), thereby enabling arbitrarily long computations provided that the physical error rate remains below the fault-tolerance threshold. Consequently, the practical cost of executing a quantum algorithm is determined not only by its logical circuit complexity, but also by the substantial overhead introduced by fault-tolerant error correction. Accordingly, recent resource-estimation studies increasingly account for these error-correction costs when evaluating the feasibility of large-scale quantum algorithms~\cite{gidney2025factor,cain2026shor}.

In an error-corrected architecture, every logical operation must be performed on encoded qubits, which internally involve syndrome extraction, logical gate execution, and decoder-assisted error recovery. As a result, the physical resources required to implement an error-corrected MCT circuit depend not only on its logical decomposition but also on the underlying quantum error-correcting code, the realization of logical Clifford+T gates, decoder complexity, and hardware communication constraints. Therefore, logical optimization metrics such as T-count and T-depth alone are insufficient to capture the practical cost of large-scale MCT implementations. In this work, we aim to bridge this gap by evaluating these implementations using BB-code-specific resource metrics.

The surface code~\cite{surface-code} has long been a leading candidate for fault-tolerant quantum computation owing to its high error threshold and compatibility with local interactions. Nevertheless, its substantial physical-qubit overhead has motivated the development of quantum low-density parity-check (qLDPC) codes~\cite{qLDPC}, which offer asymptotically higher encoding rates while preserving sparse stabilizer measurements. Among these, BB codes are the most promising for fault-tolerant architectures due to their structured connectivity and favorable code-implementation trade-offs. These developments motivate a re-evaluation of quantum resource estimation for MCT gates in the qLDPC-based architectures.

\subsection{Organization and contribution}
\label{sub:org}
In this work, we address the lack of fault-tolerant resource estimates for Toffoli mapping by evaluating multi-controlled Toffoli (MCT) gates on bivariate bicycle (BB) code architectures. We decompose MCT gates into Toffoli gates and develop mapping techniques that exploit the decomposition's binary-tree structure. Our main contributions are as follows:
\begin{itemize}
    \item We establish a baseline comparison with a naive first-fit sequential placement of qubits relative to our structure-aware technique. We also do an architectural comparison between grids and linear architectures. We evaluate routing overhead using the inter-module instruction count, following \cite{sethi2026optimizing}. 
    
    \item We implement Toffoli gates via $\ket{\mathrm{CCZ}}$ injection, map the protocol to bicycle instructions including factory-to-module routing, estimate the resulting error with \texttt{bicycle\_numerics}, and model circuit failure using a Poisson distribution.
    \end{itemize}

%
\section{Background}
\label{sub:back}
In this section, we proceed with the preliminaries.
\subsection{Multi-controlled Toffoli gates}
The doubly controlled X gate, commonly known as the Toffoli gate, is one of the most important reversible operators in quantum computing. Acting on three qubits, the basic Toffoli gate flips the target qubit if and only if both control qubits are in the state $\ket{1}$. Its generalization, the multi-controlled Toffoli ($n$-MCT) gate, consists of $n$ control qubits and one target qubit and performs a conditional bit flip on the target when all control qubits are simultaneously in the state $\ket{1}$. For more details on quantum gates, we refer the reader to~\cite{nc}.

Over the past few decades, numerous efforts have been made to reduce the resource requirements for implementing MCT gates. For basic Toffoli gates, state-of-the-art techniques employ measurement-based uncomputation to achieve decompositions using only four T gates~\cite{gidney2018quantum, jaques2020eurocrypt}. On the other hand, recent works on MCT decomposition employ the conditionally clean ancilla technique~\cite{nie2024arxiv, khattar2025quantum, dutta2025pra} to reduce the ancilla count and the circuit depth. In this technique, the availability of clean ancillae is assumed to be limited; therefore, some of the working qubits are made to be conditionally clean and treated as ancillae during the decomposition. The technique was first introduced by Nie et al.~\cite{nie2024arxiv}, formalized by Khattar et al.~\cite{khattar2025quantum}, and later generalized in Dutta et al.~\cite{dutta2025pra}. Moreover, Dutta et al.~\cite{dutta2025pra} also provided an $\lceil\log_2 n\rceil$ Toffoli-depth decomposition of an $n$-MCT using $n-2$ additional ancilla qubits, and proved its optimality by showing that the Toffoli depth cannot be reduced further, irrespective of the number of available ancilla.

The optimal-Toffoli-depth decompositions of $n$-MCT gates, assuming all-to-all qubit connectivity, for $n=5$ is shown in Figure~\ref{fig:optimal_mct}. For a detailed summary of state-of-the-art results on decomposing MCT gates into basic Toffoli gates, we refer the reader to \cite[TABLE II]{dutta2025pra}.
\begin{figure}[htbp]
    \centering
    \scalebox{0.9}{\includegraphics[scale=0.8]{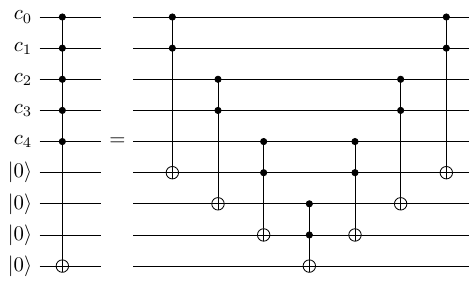}}
    \caption{Optimal Toffoli-depth $n$-MCT decompositions assuming all-to-all qubit connectivity, for $n=5$.}
    \label{fig:optimal_mct}
\end{figure}
\subsection{qLDPC and BB codes}
Quantum low-density parity-check (qLDPC) codes constitute an important family of stabilizer codes characterized by sparse parity-check matrices. Similar to their classical counterparts, qLDPC codes have the property that each stabilizer generator acts on only a constant number of qubits, while each qubit participates in a bounded number of stabilizer checks. This sparsity enables efficient syndrome extraction and, more importantly, allows qLDPC codes to achieve favorable encoding rates and code distances with substantially lower physical-qubit overhead than conventional topological codes. Recent constructions, including hypergraph-product codes~\cite{tillich2013tit}, quantum Tanner codes~\cite{leverrier2022focs}, balanced-product codes~\cite{bpqc2021}, and lifted-product codes~\cite{lpqc}, have demonstrated the possibility of constructing quantum codes with constant encoding rate and asymptotically growing distance.
\anik{I am thinking of removing the examples due to space constraints}

The sparse connectivity specified by their parity-check matrices can instead induce long-range interactions between physical qubits. Consequently, implementing qLDPC codes requires not only efficient syndrome extraction and decoding but also hardware architectures \cite{wang2025nature} and compilation strategies \cite{tourdegross}, similar to lattice surgery in surface codes ~\cite{gameofsurfacecodes,leblond2024realistic} that support their non-local connectivity. Among the various qLDPC constructions, bivariate bicycle (BB) codes are particularly relevant to practical fault-tolerant quantum computing. Recent experimental demonstrations have shown that BB codes can be implemented on superconducting quantum hardware by exploiting hardware capabilities that support long-range interactions~\cite{bravyi2024highthreshold}.

BB codes are CSS qLDPC codes constructed from two sparse bivariate polynomials $a(x,y)$ and $b(x,y)$ over $\mathbb{F}_2$. Evaluating these polynomials on cyclic-shift matrices $S_\ell, S_m$ provides two binary matrices $A = a(S_\ell,S_m),\, B = b(S_\ell, S_m)$, from which the CSS parity check matrices are defined as
$$H_X = [A \; B], \,H_Z = [B^T \; A^T].$$
The sparsity of $a$ and $b$ yields sparser stabilizer checks, thereby preserving the LDPC property. Unlike surface-code stabilizers, these interactions need not be geometrically local in a 2D nearest-neighbor layout. The cyclic-shift structure naturally induces structured long-range connectivity. BB codes also encode multiple logical qubits within a single code block, providing substantially higher encoding rates than conventional surface-code patches. In this work, we consider the $\left[[144,12,12]\right]$ BB-code instance, which encodes 12 logical qubits into 144 physical data qubits with distance 12. One logical qubit is reserved as a pivot/ancilla for logical operations and routing.

\subsection{Fault Tolerant compilation for BB codes}
Compilation for fault-tolerant architectures differs fundamentally from compilation for NISQ-based architectures~\cite{bhaumik2026optimal}. Rather than targeting individual physical qubits, fault-tolerant compilation operates at the logical level, where physical qubits are organized into encoded logical blocks. In the surface code, a logical qubit is typically represented by a code patch, whereas in BB-code-based architectures, a single code block or module can encode multiple logical qubits. Consequently, the target instruction set may also differ substantially from a conventional gate-based instruction set. A compiler may instead map the algorithm onto the fault-tolerant logical operations supported by the underlying architecture.

Another important aspect of fault-tolerant compilation is the placement and routing of magic-state factories~\cite{gameofsurfacecodes}. Non-Clifford operations, such as T gates, typically require the preparation and injection of magic states to implement the corresponding logical operation. Magic-state factories continuously produce these resource states, introducing an additional architectural component that must be considered during compilation. In particular, the compiler must account for the placement of factories and the communication required to deliver magic states to the logical qubits that consume them. In this paper, we consider a $\ket{\mathrm{CCZ}}$-state injection protocol for executing our Toffoli-intensive workloads. This approach allows the non-Clifford resource generation to be largely separated from the computation itself and enables optimizations at the logical level of the Toffoli network.

Yoder et al.~\cite{tourdegross} recently introduced a compilation scheme for BB-code-based quantum hardware. They developed a new instruction set architecture (ISA) for compiling quantum instructions to a fault-tolerant abstraction, together with corresponding logical error-rate estimates. These compilers currently support results through simulation as cloud-based FTQC hardware are not available yet. 


\section{Tree based Placement in BB Code Architecture}
\label{sec:1}
In this work, we abstract each decoded BB-code block as a logical module with capacity $C=11$ and study how Toffoli/CCZ interactions should be placed and routed across a graph of such modules. The BB-code-based architectures are not yet well established. Yoder et al.~\cite[Figure~2(a)]{tourdegross} considered a linear architecture in which BB modules are arranged sequentially, with a magic-state factory placed at one end. Sethi et al.~\cite{sethi2026optimizing} extended this architecture to a long-grid topology.

From~\cite[Table~2]{tourdegross}, it is evident that the cost of inter-module operations is substantially higher than that of intra-module operations. Hence, we use the number of inter-module interactions as our primary routing metric. The optimal Toffoli-depth decomposition of an $n$-MCT~\cite{dutta2025pra} has a hierarchical structure composed of smaller $k$-MCTs. We exploit this structure through a tree-based placement strategy that groups the $k$-MCTs corresponding to the same lower-level subtree into a single BB module. The intermediate ancilla qubits created by each group are then propagated toward the higher-level outputs. This preserves the hierarchical dependencies of the decomposition while reducing unnecessary inter-module communication. The grouping factor $k$ determines both the number of controls assigned to each module and the number of BB modules required, given by $\lceil  n/k \rceil$. Consequently, the choice of $k$ introduces a topology-dependent trade-off among local packing efficiency, communication distance, and residual ancilla capacity.

Each module has a maximum capacity of 11 logical qubits and can therefore accommodate at most $k=6$ control qubits (as the optimal Toffoli-depth $n$-MCT decomposition requires $2n-1$ working qubits~\cite{dutta2025pra}). However, accounting for the additional ancilla required to compose the complete MCT may make such a choice impractical. In particular, it may require additional uncomputation, increasing the gate complexity, or leave insufficient capacity for dirty ancillae. We therefore derive a bound on the permissible value of $k$ in Proposition~\ref{prop:prop1}.

\begin{proposition}
\label{prop:prop1}
   Consider an $n$-MCT decomposition mapped onto BB-code-based modules with logical capacity $C=11$. Then, each module can implement a maximum of $k$-MCT for $k \leq 5$, where $\lceil  n/k \rceil$ modules provide sufficient residual logical-qubit capacity to accommodate the higher-level MCT operations in the decomposition without requiring additional ancilla qubits.
\end{proposition}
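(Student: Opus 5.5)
We argue by a capacity count, using only two facts: the optimal Toffoli-depth decomposition of a $k$-MCT~\cite{dutta2025pra} uses $2k-1$ working qubits, and each module has $C=11$ logical qubits.

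\emph{Step 1: the trivial upper bound.} A module hosting a local $k$-MCT needs at least $2k-1 \le 11$ qubits, so $k\le 6$.

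\emph{Step 2: excluding $k=6$.} With $k=6$ a module uses $2k-1=11$ qubits, leaving residual capacity $11-11=0$. When $\lceil n/k\rceil\ge 2$, the higher-level part of the decomposition combines the per-module outputs. Doing so needs somewhere to hold the propagated intermediate result, together with at least one working or ancilla slot for the next Toffoli layer, in the spirit of the conditionally clean ancilla construction. A full module offers no such slot. We then have two options. The first is to uncompute part of the local tree to free qubits, which raises the gate count and contradicts ``without requiring additional ancilla qubits'' in the efficient sense intended. The second is to route to a fresh module, which is an extra ancilla module beyond the $\lceil n/k\rceil$ allowed. Either way $k=6$ fails.

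\emph{Step 3: sufficiency of $k\le 5$.} For $k\le5$ each module has residual capacity at least $11-(2k-1)=12-2k\ge 2$. We show this suffices to implement the higher-level tree. Each module emits one intermediate qubit, the root of its local subtree. Combining the $m=\lceil n/k\rceil$ roots by a binary tree of Toffolis requires $m-1$ additional target ancillae, following the $2m-1$ count applied recursively. These ancillae can be distributed over the modules, because each module has at least $2$ spare slots, giving total spare capacity at least $2m\ge m-1$. In particular the last, partially filled module also has room. We also account for the final target qubit, which needs one more slot, and note that the reserved pivot qubit is already excluded from $C=11$.

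\emph{Main obstacle.} The hard part is making Step 3 rigorous at every level of the tree. We must check that the ancillae for a given level can sit in modules adjacent to (or identical with) the operands, and not merely that enough slots exist in total. To handle this we plan an explicit assignment: the ancilla that merges the roots of modules $2i$ and $2i+1$ is placed in a spare slot of module $2i$, and the same pairing is repeated up the tree. This uses at most one spare slot per module per level, and since a module is reused only $O(1)$ times along its left spine, the $\ge 2$ spare slots should be enough, possibly after freeing local ancillae that become conditionally clean.
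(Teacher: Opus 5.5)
Your Steps 1--2 follow the paper's argument. $2k-1\le 11$ forces $k\le 6$, and at $k=6$ a full module leaves no slot for the higher-level targets without recomputation or an extra module. Like the paper's, this failure is a worst-case one (e.g.\ $6\mid n$, so every module is full), and that is all the uniform bound needs. The paper then simply asserts sufficiency for $k\le 5$, so your Step 3 count goes slightly further and already settles the capacity claim as stated. Each module keeps at least $12-2k\ge 2$ free slots. Merging the $m=\lceil n/k\rceil$ roots needs exactly $m-1$ extra slots: $m-2$ ancillae plus the target, so your ``$m-1$ plus the target'' overcounts by one, harmlessly. Since $2m\ge m-1$, the capacity suffices.

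The ``main obstacle'' paragraph should be dropped or repaired. The proposition says nothing about adjacency, and the assignment you sketch does not work. If each merge target goes into the left operand's module, module $0$ sits on the left spine of the whole tree and receives one target per level. That is $\lceil\log_2 m\rceil$ targets, which exceeds its two spare slots at $k=5$ once $m\ge 5$, so ``reused only $O(1)$ times'' is false. If you want an explicit per-module placement, order the leaves by module index and assign the merge node $\mathrm{LCA}(j-1,j)$ to module $j$ for $j=1,\dots,m-1$. This gives a bijection between these modules and the $m-1$ merge nodes, so every module hosts at most one higher-level target.
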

\begin{proof}
From~\cite{dutta2025pra}, the binary-tree-based decomposition of an $n$-MCT requires $2n-1$ working qubits in total, including $n-2$ ancilla. Since each BB-code-based module has logical capacity $C=11$, a single module can theoretically accommodate a $6$-MCT, which requires 11 qubits. However, using the full module capacity for the lower-level MCTs leaves insufficient capacity for the ancillae required to construct larger MCTs. To avoid additional inter-module operations and ancilla management, we restrict the maximum MCT size per module to $n=5$. Hence, $k\leq 5$.

Consequently, implementing an $n$-MCT with $n\gg 5$ requires at least $\lceil n/5 \rceil$ many BB-code-based modules to accommodate the decomposition without requiring additional ancilla qubits.
\end{proof}

We find that both the number and placement of magic-state factories significantly influence routing efficiency, as each gate requiring a magic-state resource must communicate with a factory. Consequently, the factory configuration affects routing at a finer-grained, gate-level scale.




\subsection{Linear Topology placement} 
\label{seb:lin-top}
We first consider a linear architecture in which the BB modules are arranged sequentially along a single line. This is one of the most commonly explored architectures. Magic-state factories are attached to the modules at regular intervals, and we vary the factory placement period to study the effects of factory density and placement on routing performance. Figure~\ref{fig:routing_volume_comparison} shows how different $k$-groupings affect the number of inter-module interactions under a first-fit-based placement strategy. Varying the factory-placement period also narrows the gap between the different groupings and the baseline.
\begin{figure}[htbp]
    \centering
\scalebox{0.7}{    \includegraphics[width=0.9\linewidth]{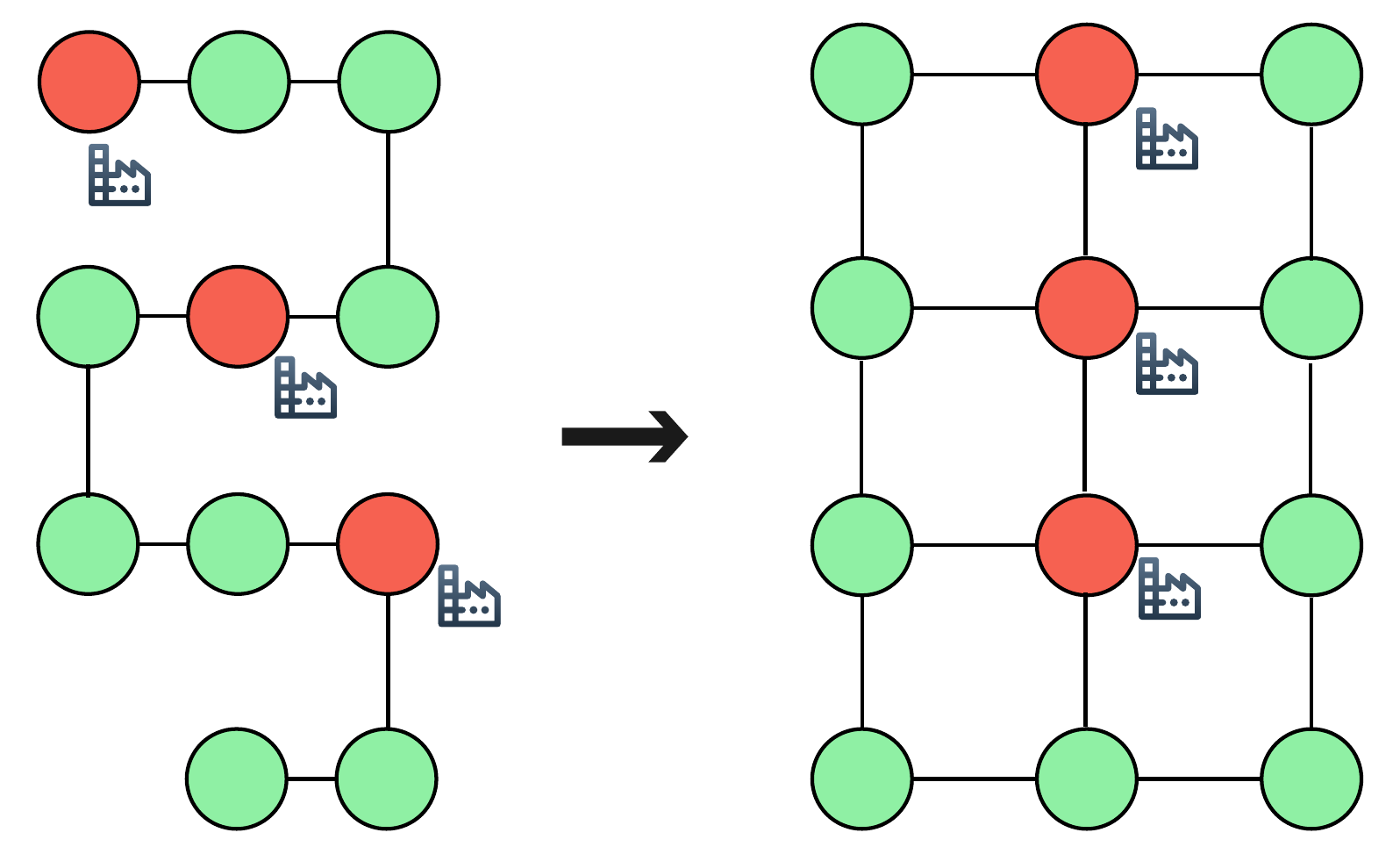}
}    \caption{Linear vs. grid factory placement, showing heuristic-selected factory modules in red and standard modules in green.}
    \label{fig:linvsgrid}
\end{figure}

\begin{figure*}[htbp]
    \centering

    
    \begin{subfigure}[t]{0.32\textwidth}
        \centering
        \includegraphics[width=\linewidth]
        {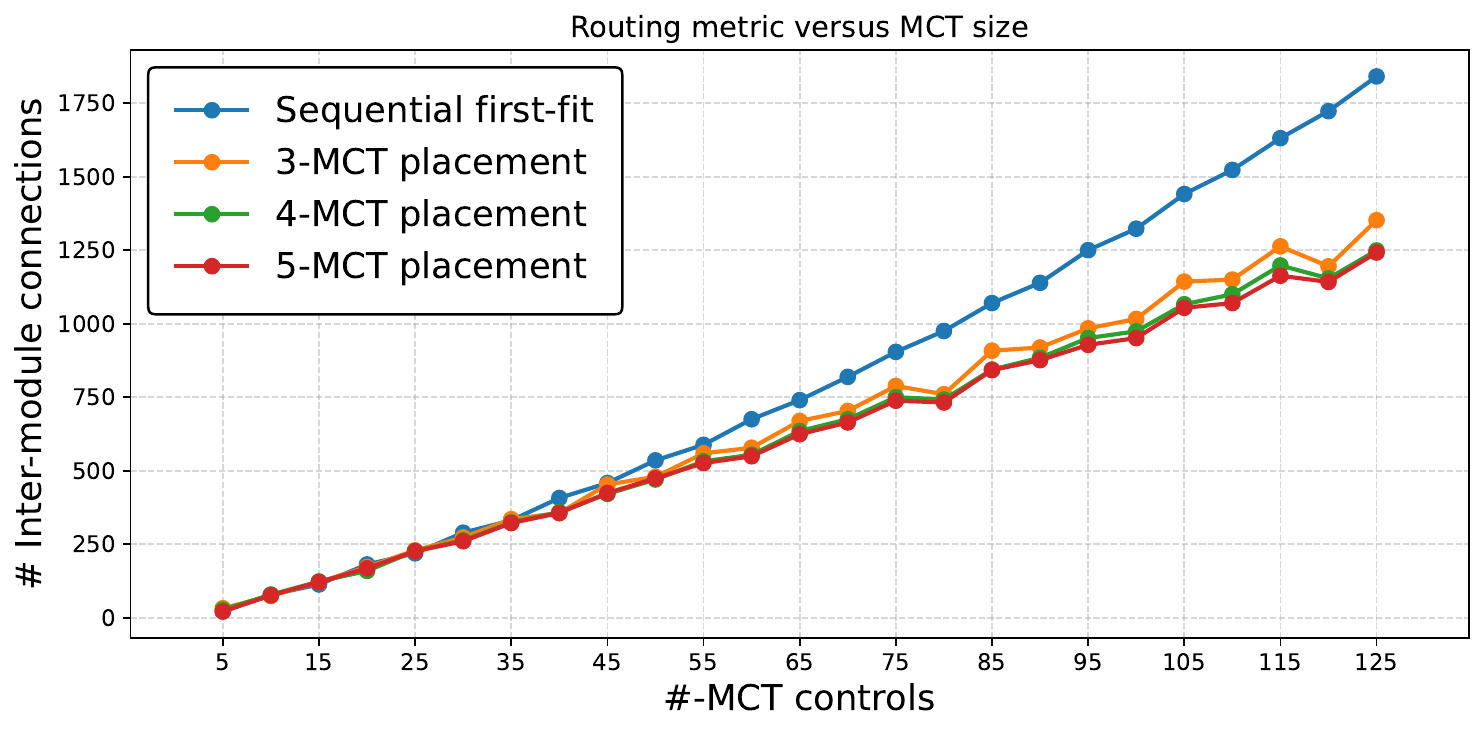}
        \caption{$\mathrm{Period}=2$}
        \label{fig:routing_period_2}
    \end{subfigure}
    \hfill
    \begin{subfigure}[t]{0.32\textwidth}
        \centering
        \includegraphics[width=\linewidth]
        {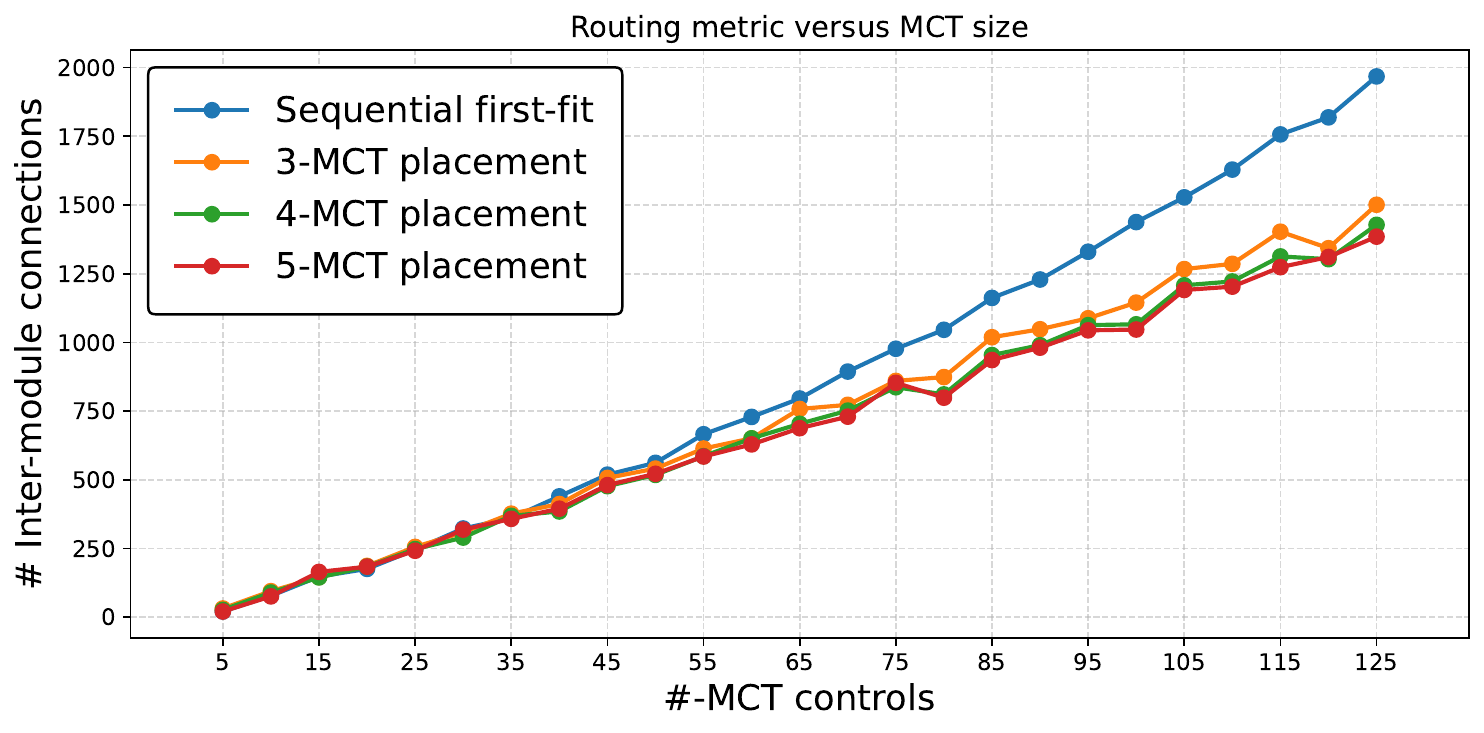}
        \caption{$\mathrm{Period}=3$}
        \label{fig:routing_period_3}
    \end{subfigure}
    \hfill
    \begin{subfigure}[t]{0.32\textwidth}
        \centering
        \includegraphics[width=\linewidth]
        {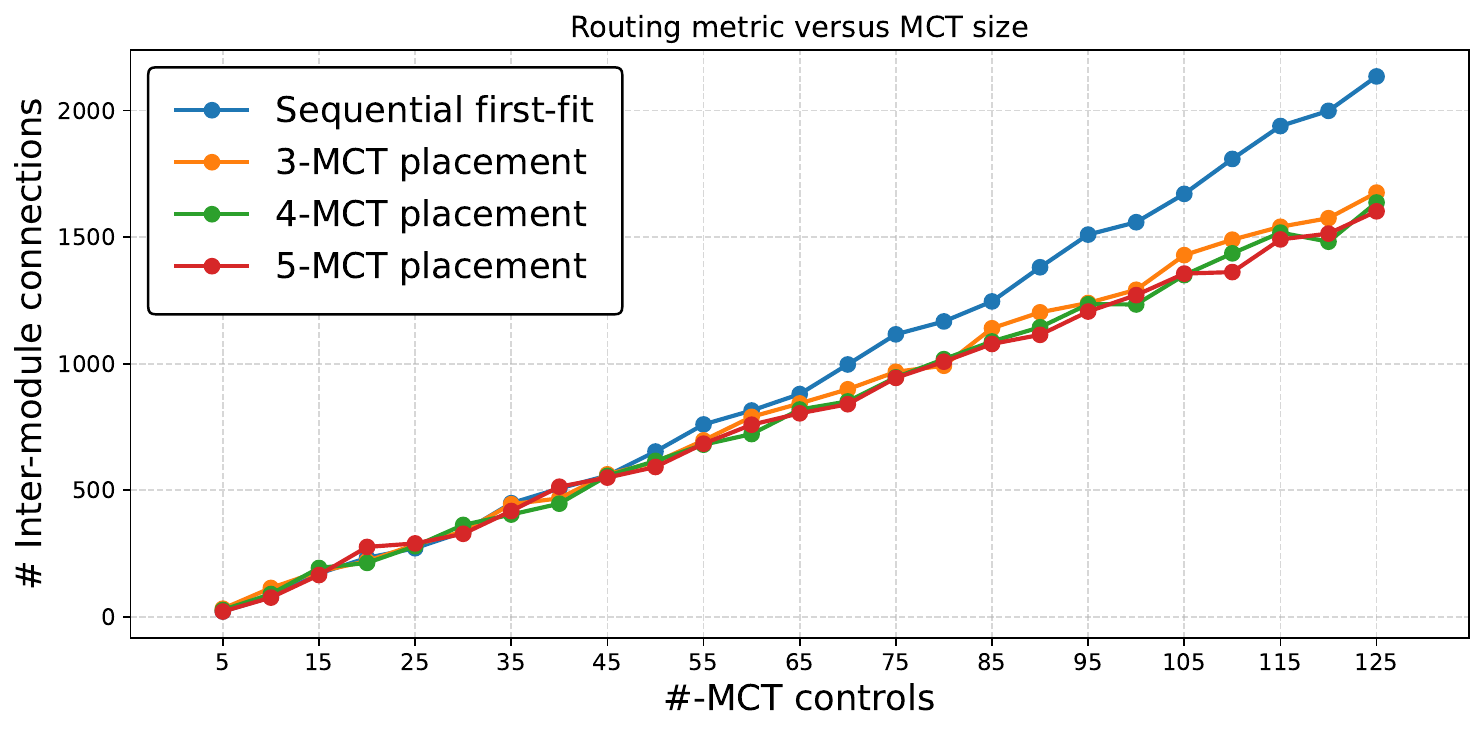}
        \caption{$\mathrm{Period}=4$}
        \label{fig:routing_period_4}
    \end{subfigure}

    \vspace{0.6em}


    \begin{subfigure}[t]{0.32\textwidth}
        \centering
        \includegraphics[width=\linewidth]
        {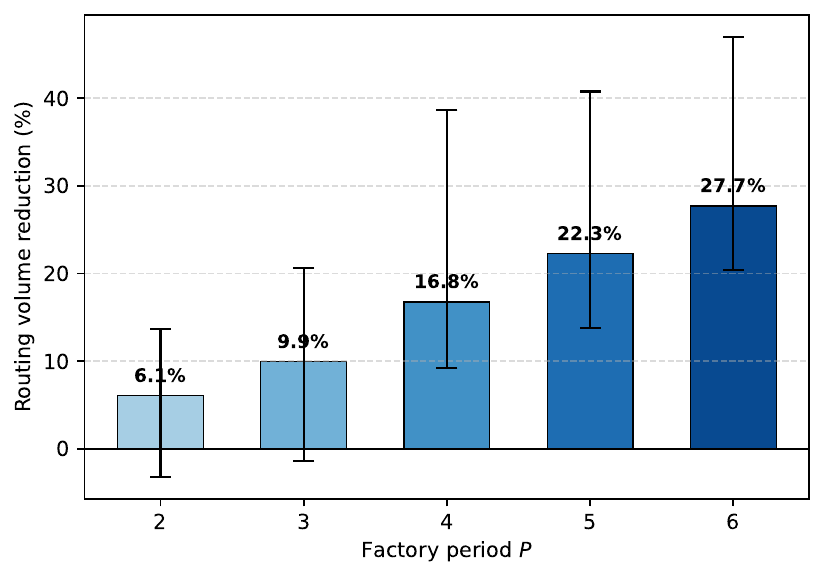}
        \caption{$k=3$}
        \label{fig:grid_routing_k3}
    \end{subfigure}
    \hfill
    \begin{subfigure}[t]{0.32\textwidth}
        \centering
        \includegraphics[width=\linewidth]
        {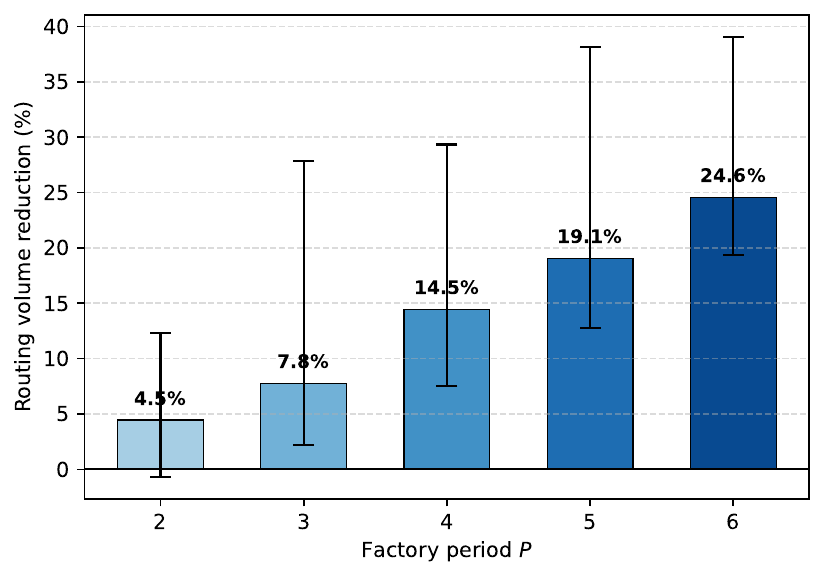}
        \caption{$k=4$}
        \label{fig:grid_routing_k4}
    \end{subfigure}
    \hfill
    \begin{subfigure}[t]{0.32\textwidth}
        \centering
        \includegraphics[width=\linewidth]
        {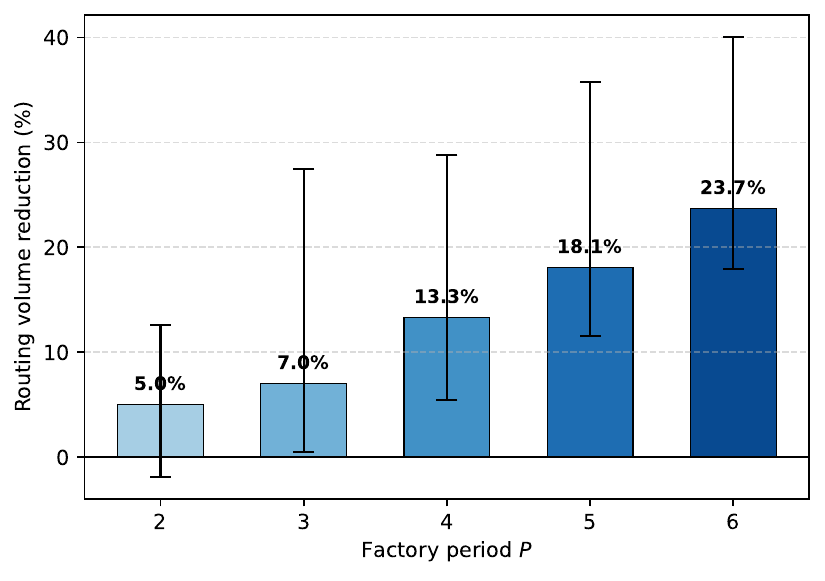}
        \caption{$k=5$}
        \label{fig:grid_routing_k5}
    \end{subfigure}

    \begin{subfigure}[t]{0.32\textwidth}
            \centering
            \includegraphics[width=\linewidth]
            {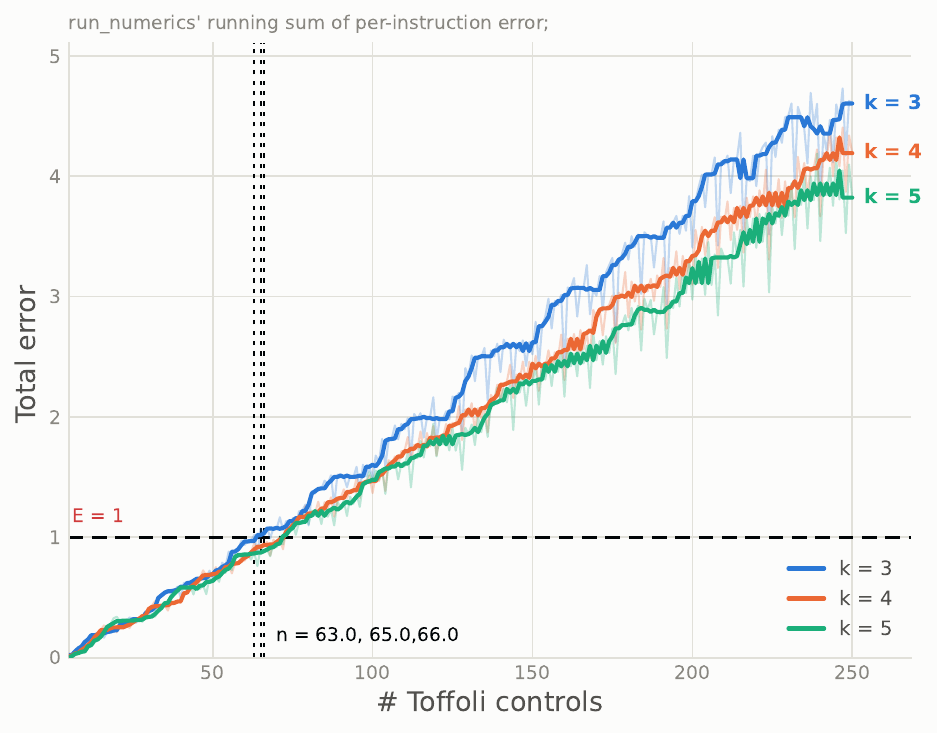}
            \caption{Aggregate logical error at $p=10^{-3}$.}
            \label{fig:gross_1e3_aggregate_error}
        \end{subfigure}
        \hfill
        \begin{subfigure}[t]{0.32\textwidth}
            \centering
            \includegraphics[width=\linewidth]
            {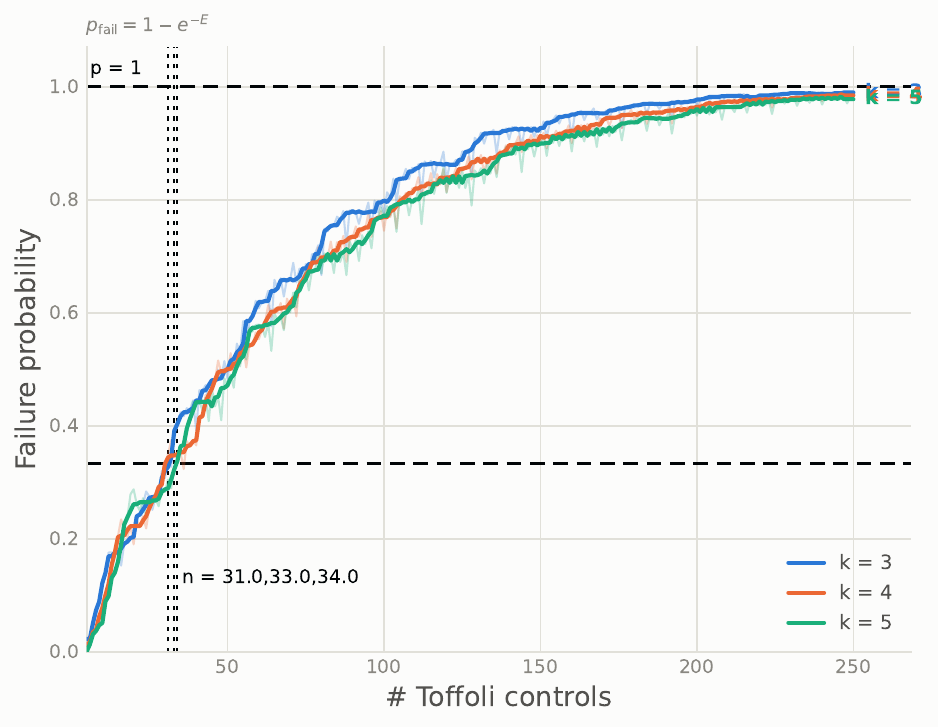}
            \caption{Circuit failure probability at $p=10^{-3}$.}
            \label{fig:gross_1e3_failure_probability}
        \end{subfigure}
        \hfill
        \begin{subfigure}[t]{0.32\textwidth}
            \centering
            \includegraphics[width=\linewidth]
            {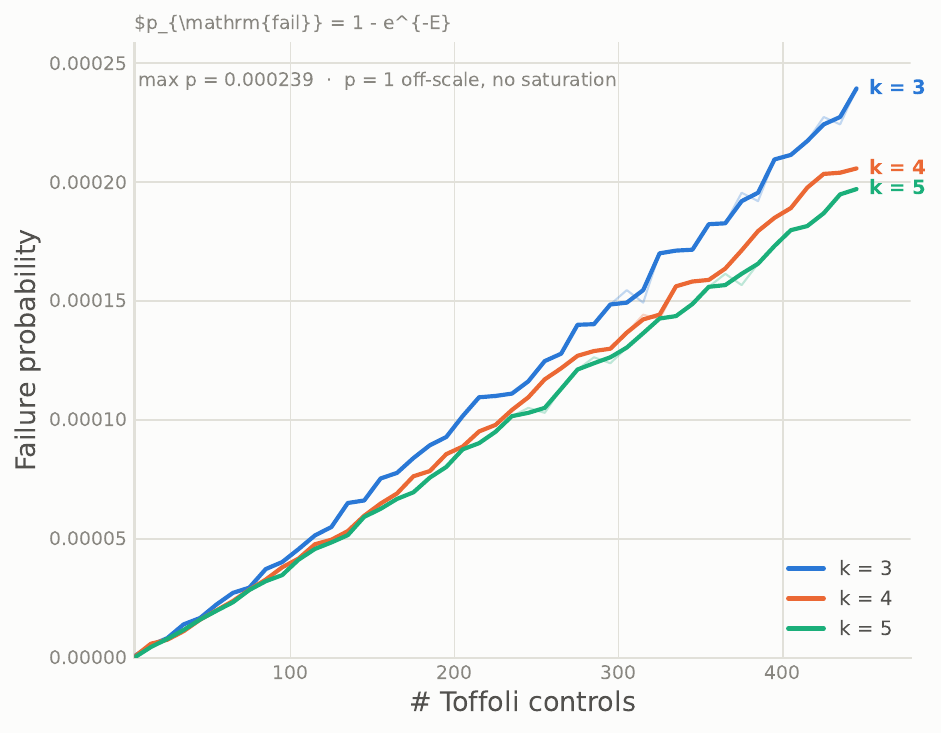}
            \caption{Circuit failure probability at $p=10^{-4}$.}
            \label{fig:gross_1e4_failure_probability}
        \end{subfigure}
    \caption{
        Inter-module routing characteristics under different factory
        placements and grouping configurations. 
        (a)--(c) show the linear BB-module architecture for factory
        periods $2$, $3$, and $4$, respectively, while
        (d)--(f) show the improvement in grid architecture vs linear for grouping factors $k=3$, $4$, and $5$.
        (g)--(h) shows the aggregate and circuit failure probabilities for hardware error rates of $p = 10^{-3}$ and and $p = 10 ^{-4}$
    }
    \label{fig:routing_volume_comparison}
\end{figure*}

\subsection{Grid Topology Placement}
\label{sub:grid-top}
We find that in the grid-based structure, factory density and factory placement take precedence on improving routing overhead. Sethi et al.~\cite{sethi2026optimizing} previously studied BB-code placement on grid topologies. However, their architecture places all the magic-state factories along a single edge of the grid, which does not adequately capture the distributed factory configurations considered in our work.

To investigate the impact of factory placement, we consider a square-grid architecture with dimensions $p = \sqrt{M}$ and $q = \lceil M/p \rceil$, where $M$ denotes the total number of BB modules. We compare the grid and linear topologies while keeping the number of magic-state factories fixed. For the grid topology, we use a facility-location heuristic to optimize factory placement and evaluate its impact on Toffoli routing performance, an example of such a placement is shown in Figure ~\ref{fig:linvsgrid}

\subsubsection{Facility Location Optimization Heuristic}
\label{subsub:floh}
This factory placement problem resembles facility location. We initialize factory sites using graph medians and refine them with swap-based local search. Each factory is placed in a separate module and connected to its assigned BB module by an additional edge, reflecting their treatment as distinct entities in fault-tolerant compilation. To determine the factory locations, we first employ a greedy heuristic~\cite{hakimi1964optimum}, followed by a 1-swap local-search heuristic~\cite{teitz1968heuristic}.

Let $G=(V, E)$ denote the BB-module graph, and let $d(u,v)$ denote the shortest-path distance between vertices $u,v\in V$. Given a set $S\subseteq V$ of factory sites, the greedy placement strategy minimizes the objective
$$J(S) = \sum_{v \in V} \min_{s\in S}d(v,s); \quad S \subseteq V,$$
where $S$ denotes the subset of modules selected as factory sites. After the greedy procedure has considered all vertices according to this objective, we perform a 1-swap local search over the candidate set. Specifically, each selected factory site $s \in S$ is tentatively replaced by an unselected module $v \in V \setminus S$. The replacement is accepted whenever $J\!\left((S \setminus \{s\}) \cup \{v\}\right) < J(S).$
The procedure terminates when no improving single-site swap exists.


\subsection{Evaluation of Routing Results}
\label{sub:evaluation}
We model the architectures as graphs and simulate qubit placement and instruction flow to quantify inter-module routing count and distance. The routing metric quantifies the number of inter-module interactions present, as previously modeled by \cite{sethi2026optimizing}. Specifically, for each interaction, we compute the distance between the factory and the modules containing the corresponding data qubits and sum these distances over all interactions. For example, consider a Toffoli gate $(c_1,c_2,t)$ and a $\ket{\mathrm{CCZ}}$ factory $F$. Since consuming a $\ket{\mathrm{CCZ}}$ state requires three independent Pauli-product measurements for injection. The number of inter-module instructions is given by
$$R = d(M(c_1),F) + d(M(c_2),F) + d(M(t),F)$$
where $M(q)$ denotes the module containing the qubit $q$. 

We compare our placement strategy against a first-fit baseline, in which the qubits are placed sequentially as shown in Figure~\ref{fig:optimal_mct}, following the order $c_1,c_2,\ldots , c_n,a_1,a_2,\ldots ,a_{n-2},t$. Then we compare grouping $k$-MCTs. Specifically, we partition $n$ control qubits into $\lceil n/k \rceil$ modules, form a $k$-MCT within each module, and ensure that the remaining ancillae can be accommodated within the residual capacity of each module in a first-fit fashion. We evaluate both the placement strategies for $n$ up to 127.

\subsubsection{Grouping and Factory-Period in Linear Topology}
Figures~\ref{fig:routing_period_2}, \ref{fig:routing_period_3}, and \ref{fig:routing_period_4} show the routing performance for different factory periods, where the factory period determines the spacing between consecutive factories along the linear topology. Across all three cases, the tree-based grouping strategy consistently improves the routing metric relative to the baseline. As the factory period increases, however, the gap between the baseline and the grouping strategy decreases, indicating that the efficiency of the linear topology deteriorates with increasing factory spacing. The maximum mean improvement is by $16.02\%$ for, $period = 2$. 

%
\subsubsection{Comparison of Grid Topology vs Linear Topology}
We compare the linear BB-module topology with a grid topology that has the same number of modules, $M$, and the same number of magic-state factories, as determined by the factory period, $f$. Figures~\ref{fig:grid_routing_k3}, \ref{fig:grid_routing_k4}, and \ref{fig:grid_routing_k5} show that relative improvement decreases modestly with k, from a maximum of $27.7\%$ at $k=3$ to $23.7\%$ at $k=5$. Tree-based placement has transferred inefficiently, since it assumes subtrees occupy the contiguous maximum first fit, which terms as locality on a chain but not on a grid. But, since $k$ is bounded in Proposition~\ref {prop:prop1}, these values cover the full set of admissible grouping factors ($k$), and the grid topology provides a positive impact throughout this range of $k = [3,5]$.

However, each of the figures also show that the improvement in the routing metric provided by the grid topology increases monotonically with $f$, ranging from $6.1\%$ to $27.7\%$ for $k=3$. In other words, the grid's advantage grows as factories thin out, this is where the linear's diameter starts to dominate. 
 
%
\section{Error Estimation for Routing}
\label{sec:2}
Error is the primary attribute that FTQC architectures claim to optimize, Yoder et al.~\cite[Table 2]{tourdegross} provides a bicycle ISA for compiling quantum circuits onto BB-code-based architectures. They also characterize the error rates associated with each of these operations. Hence, we develop a workflow that takes the architecture details of the mapped Toffoli circuit and its gate-DAG as input, generates a sequence of Bicycle-ISA instructions, and feeds the resulting instructions into the error estimator \texttt{bicycle\_numerics}. The Toffoli benchmarks use $\ket{\mathrm{CCZ}}$ injection  which is not natively supported by \texttt{bicycle\_compiler}~\cite{bicycle_compiler}.
. We therefore model the instructions required for the injection's communication and correction rounds explicitly for each Toffoli, and accumulate the resulting errors.
\paragraph{Factory distillation cost}
\label{subsub:fact_dist_cost}
The factory distillation cost corresponds to the cost of generating a $\ket{\mathrm{CCZ}}$ state in a magic-state factory. We assume this cost to be constant for a given factory, with the corresponding cost incurred each time a $\ket{\mathrm{CCZ}}$ state is produced. In this study, however, we omit the distillation cost and focus primarily on errors arising from routing and inter-module communication.

\paragraph{Injection cost}
\label{subsub:inj_cost}
The injection cost arises from entangling the $\ket{\mathrm{CCZ}}$ state with the data qubits and subsequently measuring the corresponding factory qubits. We model this process using \texttt{Measure} operations within the modules and \texttt{JointMeasure} operations across the logical qubits of different modules as shown in \cite{bicycle_compiler}

At the gate level, the injection procedure contains a $CNOT_{d_i\rightarrow r_i}$ followed by a measurement of the $r_i$ qubit, which is equivalent to a destructive two-qubit Pauli-product measurement, $Z_{d_i}Z_{r_i}$. For each Toffoli gate, we first select the nearest available factory by evaluating the distances between the factory and the modules containing the participating data qubits. After selecting the factory, we decompose the corresponding injection operations. For each $Z_{d_i}Z_{r_i}$ measurement, we consider two cases:
\begin{itemize}
	\item If $d_i$ is located in the same module to which the factory is attached, we model the operation as a local \texttt{Measure} operation within that module.
	\item If $d_i$ is located in a different module from the factory, we perform a local \texttt{Measure} operation in each module along the communication path between the factory module and the module containing $d_i$, and connect these operations using a sequence of \texttt{JointMeasure} instructions.
\end{itemize}
%
\paragraph{Correction cost}
\label{subsub:corr_cost}
Once the entanglement and measurement steps are complete, conditional Clifford corrections are required to obtain the desired output state up to a global phase. The correction cost depends on the measurement outcomes $m_0, m_1, m_2$ of the factory qubits. As discussed in~\cite[Section~3.4]{tourdegross}, Clifford corrections that cross module boundaries can only be implemented through cross-module measurements. The conditional corrections in the CCZ-injection protocol consist of $CZ$ and $Z$ operations. Since a $CZ$ correction may require an inter-module measurement, we estimate the expected correction cost by considering the probability that each conditional $CZ$ correction is required.

Let $C_0$, $C_1$, and $C_2$ denote the costs associated with the three possible $CZ$ corrections. Assuming that each measurement outcome $m_i$ is independently equal to 1 with probability $1/2$, the expected correction cost is .
$$\mathbb{E}\!\left[C_{corr}\right] = \mathbb{E}\!\left[m_0 C_0 + m_1 C_1 + m_2 C_2\right] = \frac{1}{2}C_0 + \frac{1}{2}C_1 + \frac{1}{2}C_2.$$

For each inter-module $CZ$ correction, we generate the corresponding \texttt{JointMeasure} operation with a weight of $1/2$ in the expected error calculation. Single-qubit $Z$ corrections are omitted because their cost within a module is negligible compared with that of inter-module operations and therefore contributes negligibly to the overall error estimate.

\subsection{Error Evaluation}
\label{sub:error_evaluation}

We report two error metrics. The first is the aggregate error, \[E= \sum_i p_i\]
Where $p_i$ denotes the logical error probability associated with the $i$-th instruction. The quantity $E$ represents the expected number of logical faults per Toffoli-circuit execution. Because it is an additive aggregate of individual instruction error probabilities, $E$ grows approximately linearly with the number of error-prone operations. Importantly, $E$ is an expected fault count rather than a probability and may therefore exceed unity. Next, we estimate the circuit failure probability as 
\[P_{fail} = 1 - e^{-E}\]
This expression follows from modeling the number of faults as a Poisson random variable with mean $E$, under the assumption that instruction failures are approximately independent and that the individual error probabilities are sufficiently small. 
Consequently, the reported figures quantify placement-dependent error rather than the end-to-end failure probability. We evaluate on the estimator provided by \cite{tourdegross,bicycle_compiler} which have encoded error rates for given instructions in the bicycle-ISA. The code for instruction generation is to be provided in GitHub \footnote{\url{https://github.com/anik314159/QLDPC-Toffoli}}.

\subsubsection{Results}
\label{subsub:result}

The error evaluations given are only for linear BB-code-based architectures. Although the Bicycle architecture permits more general module-based connectivity, extending the error analysis to the proposed two-dimensional grid would require additional assumptions regarding inter-module routing and physical connectivity. No significant improvement is seen over period, However in the aggregate error estimation (Figure \ref{fig:gross_1e3_aggregate_error}) the error-relations display decrease of error over increase of $k$ hence we chose an average value of $\textit{period}=4$, we report the aggregate error in Figure~\ref{fig:gross_1e3_aggregate_error}, and circuit failure probability for $p = 10^{-3}$ and $p = 10 ^{-4}$ Figure~\ref{fig:gross_1e3_failure_probability}, Figure~\ref{fig:gross_1e4_failure_probability}. Following ~\cite[Section~4]{tourdegross}, we use a circuit failure threshold of $1/3$. However, this threshold was originally reported for a complete computation, whereas we apply it here for two reasons: first, to enable direct comparison with~\cite{tourdegross}, and second, because inter-module measurements account for approximately $97\%$ of the total error in our model.

For a physical error rate of $p=10^{-3}$, the failure-probability threshold is reached at approximately $n=31$, $33$, and $34$ for $k=3$, $4$, and $5$, respectively.  For a physical error rate of $p=10^{-4}$, the projected circuit size required to reach the failure-probability threshold lies far beyond the range considered in our experiments $n \leq 450$, indicating that the threshold occurs only at substantially larger values of $n$.




\section{Conclusion and Future Works}
\label{sec:con}
Multi-controlled Toffoli gates provide an important benchmark for quantum circuit compilation and resource optimization. In this work, we study their mapping onto BB-Code based fault-tolerant architectures and introduce a structure-aware placement strategy that exploits the binary-tree structure of optimal-depth MCT decompositions. Our results demonstrate that decomposition-aware placement can reduce inter-module communication and improve routing efficiency. More broadly, our work highlights the shift in compilation objectives from the NISQ regime, where circuits are mapped directly onto physical qubits, to the fault-tolerant regime, where logical modules, communication constraints, and magic-state factories are central for error-safe circuit execution. Extending this approach to other arithmetic and cryptographic applications, as well as to other MCT decompositions, fault-tolerant topologies, and resource models, provides a natural direction for future research.



\bibliographystyle{ACM-Reference-Format}
\bibliography{Reference}

\appendix



\end{document}